\newcommand{\setX}{\mathbb{X}}
\newcommand{\setM}{\mathbb{M}}
\newcommand{\setR}{\mathbb{R}}
\newcommand{\setU}{\mathbb{U}}
\newcommand{\her}{\mathsf{H}}
\newcommand{\man}{\mathcal{N}}
\newcommand{\bzero}{{\boldsymbol{0}}}
\newcommand{\bx}{{\boldsymbol{x}}}
\newcommand{\ee}{\mathrm{e}}
\newcommand{\yy}{\mathrm{y}}
\newcommand{\byy}{\mathbf{y}}
\newcommand{\zz}{\mathrm{z}}
\newcommand{\set}[1]{\left\lbrace#1\right\rbrace}
\newcommand{\brc}[1]{\left( #1 \right)}
\newcommand{\bz}{{\boldsymbol{z}}}
\newcommand{\ba}{{\mathbf{a}}}
\newcommand{\Thr}{{\mathrm{Th}}}
\newcommand{\dif}{\mathrm{d}}
\newcommand{\by}{{\boldsymbol{y}}}
\newcommand{\trp}{\mathsf{T}}
\newcommand{\mA}{\mathbf{A}}
\newcommand{\mW}{\mathbf{W}}
\newcommand{\mI}{\mathbf{I}}
\newcommand{\mU}{\mathbf{U}}
\newcommand{\mV}{\mathbf{V}}
\newcommand{\mY}{\mathbf{Y}}
\newcommand{\E}{\mathbb{E}\hspace{.5mm}}
\newcommand{\norm}[1]{\lVert #1 \rVert}
\newcommand{\abs}[1]{\lvert #1 \rvert}
\newtheoremstyle{mystyle}
  {}
  {}
  {}
  {}
  {\bfseries}
  {:}
  { }
  {}
\theoremstyle{mystyle}
\newtheorem{definition}{Definition}
\newtheorem{proposition}{Proposition}
\newtheorem{assumption}{Conjecture}
\algnewcommand\algorithmicLet{\textbf{Let}}
\algnewcommand\Let{\item[\algorithmicLet]}
\algnewcommand\algorithmicSet{\textbf{Set}}
\algnewcommand\Set{\item[\algorithmicSet]}
\algnewcommand\algorithmicInitiate{\textbf{Initiate}}
\algnewcommand\Initiate{\item[\algorithmicInitiate]}
\algnewcommand\algorithmicStart{\textbf{Begin}}
\algnewcommand\Begin{\item[\algorithmicStart]}
\algnewcommand\algorithmicEnd{\textbf{End}}
\algnewcommand\End{\item[\algorithmicEnd]}
\algnewcommand\algorithmicOutP{\textbf{Output:}}
\algnewcommand\Out{\item[\algorithmicOutP]}
\algnewcommand\algorithmicInP{\textbf{Input:}}
\algnewcommand\In{\item[\algorithmicInP]}
\newcounter{bar}
\begin{document}
\title{Oversampled Adaptive Sensing with Random Projections: Analysis and Algorithmic Approaches\vspace*{-2mm}}

\author{
\IEEEauthorblockN{
Ralf R. M\"uller\IEEEauthorrefmark{1},
Ali Bereyhi\IEEEauthorrefmark{1},
Christoph F. Mecklenbr\"auker\IEEEauthorrefmark{2}\thanks{This work has been accepted for presentation in the 18th IEEE International Symposium on Signal Processing and Information Technology (ISSPIT) 2018 in Louisville, Kentucky, USA. The link to the final version in the Proceedings of ISSPIT will be available later.}
}
\IEEEauthorblockA{
\IEEEauthorrefmark{1}Institute for Digital Communications, Friedrich-Alexander Universit\"at Erlangen-N\"urnberg, Germany\\
\IEEEauthorrefmark{2}Institute of Telecommunications, Technische Universit\"at Wien, Austria\\
ralf.r.mueller@fau.de, ali.bereyhi@fau.de, cfm@nt.tuwien.ac.at\vspace*{-2mm}
}
}


\IEEEoverridecommandlockouts

\maketitle

\begin{acronym}
\acro{oas}[OAS]{oversampled adaptive sensing}
\acro{csi}[CSI]{Channel State Information}
\acro{awgn}[AWGN]{Additive White Gaussian Noise}
\acro{iid}[i.i.d.]{independent and identically distributed}
\acro{ut}[UT]{User Terminal}
\acro{bs}[BS]{Base Station}
\acro{tas}[TAS]{Transmit Antenna Selection}
\acro{lse}[LSE]{Least Squared Error}
\acro{rhs}[r.h.s.]{right hand side}
\acro{lhs}[l.h.s.]{left hand side}
\acro{wrt}[w.r.t.]{with respect to}
\acro{rs}[RS]{Replica Symmetry}
\acro{rsb}[RSB]{Replica Symmetry Breaking}
\acro{mse}[MSE]{mean squared error}
\acro{mmse}[MMSE]{minimum MSE}
\acro{rzf}[RZF]{Regularized Zero Forcing}
\acro{snr}[SNR]{Signal-to-Noise Ratio}
\acro{sinr}[SINR]{signal to interference and noise ratio}
\acro{rf}[RF]{Radio Frequency}
\acro{mf}[MF]{Match Filtering}
\end{acronym}

\begin{abstract}
Oversampled adaptive sensing (OAS) is a recently proposed Bayesian framework which sequentially adapts the sensing basis. In OAS, estimation quality is, in each step,~measured by conditional mean squared errors (MSEs), and~the~basis for the next sensing step is adapted accordingly.~For~given\\ average sensing time, OAS reduces the MSE compared~to~non- adaptive schemes, when the signal is sparse. This~paper~studies the asymptotic performance of Bayesian OAS,~for~unitarily~in- variant random projections. For sparse signals, it is shown that OAS with Bayesian recovery~and~hard~adaptation significantly outperforms the minimum MSE bound for non-adaptive sensing. To address implementational aspects, two computationally tractable algorithms are proposed, and their performances are compared against the state-of-the-art non-adaptive algorithms via numerical simulations. Investigations depict that these low-complexity OAS algorithms, despite their suboptimality,~outperform well-known non-adaptive schemes for sparse recovery, such as LASSO, with rather small oversampling factors. This gain grows, as the compression rate increases.\vspace*{-2mm}
\end{abstract}

\IEEEpeerreviewmaketitle

\section{Introduction}
\label{sec:intro}
In \ac{oas} \cite{muller2018oversampled}, $K$ sensors observe $N$ signal samples linearly. The  projections from the samples to the observations are modifiable and corrupted with noise. The array is supposed to sense the signal within a fixed time duration of length $T_{\rm s}$. The main objective is to design a sensing procedure, such that the signal is recovered from observations with high fidelity. This is a classical problem and has been widely studied in various contexts \cite{seber2012linear,donoho2006compressed,candes2006near,foucart2013mathematical}. The conventional approaches proposed in signal processing and information theory are formulated in the following generic form:
\begin{enumerate}[label=(\alph*)]
\item The required number of observations $L$ is determined based on prior information on the signal. For example, $L=N$ when the signal is assumed to be uniformly~distributed. For sparse signals, however, $L < N$. 
\item The oversampling factor is calculated as $M = \lceil L/K \rceil$. The array needs to sense the samples $M$ times, in order to collect as much observations as required.
\item Sensing duration $T_{\rm s}$ is divided into $M$ subframes~each of length $T_{\rm m} = T_{\rm s} / M$. The array observes the samples in each of these subframes using different projections.
\item The samples are estimated via a recovery scheme from the $MK$ collected measurements.
\end{enumerate}

For this classical framework, theoretical discussions are roughly divided into two main directions:
\begin{inparaenum}
\item A body of work investigates the number of required observations which guarantees a certain level of estimation quality;~see~for~example discussions in \cite{donoho2006compressed,candes2006near,foucart2013mathematical,wu2010renyi}. 
\item Another group of analytic studies characterize theoretical bounds on the performance of various recovery schemes, e.g. \cite{bereyhi2016statistical,bereyhi2017replica,rangan2012asymptotic,vehkapera2016analysis}.
\end{inparaenum}
Algorithmic~approaches, on another hand, mainly focus on the design of computationally tractable recovery algorithms and on the construction of linear mixing for efficient observation.

\subsection{Sequential Sensing via \ac{oas}}
\ac{oas} deviates from the classic framework by developing an \textit{adaptive} technique for sequential sensing. In this~scheme, the signal is \textit{oversampled} by an arbitrary factor, and linear projections are updated in each step via conditional posterior distortions calculated in the previous subframe. This sequentially adaptive approach was widely believed to be ineffectual for $M > \lceil L/K \rceil$ following the fact that the growth in oversampling factor reduces the duration of each subframe, i.e. $T_{\rm m}$, and hence increases the noise power in each individual sensing. Investigations have demonstrated that while this belief is true for signals with absolutely continuous priors, sequential adaptation is in fact beneficial when the signal has a mixture prior; see discussions in \cite{muller2018oversampled} and \cite{haupt2009adaptive}. This is illustrative by considering an example from sparse recovery. When the signal is sparse, zero samples are recovered with high reliability in initial subframes and canceled out in subsequent subframes by adaptation. This reduces the dimensionality of the problem in the subsequent subframes and improves the performance.

\subsection{Contributions}
The initial study on \ac{oas} in \cite{muller2018oversampled} considered scenarios with orthogonal and deterministic projections. Such an assumption was primarily taken for sake of analytical tractability. Nevertheless, practical scenarios often deal with conditions in~which observations are acquired through non-orthogonal random projections. We address this issue by investigating the performance of \ac{oas} for unitarily invariant random~pro- jections. In this respect, the main contributions are
\begin{enumerate}
\item \textit{Asymptotic characterization of \ac{oas}:} Using the decoupling property of Bayesian estimators, we characterize the performance of \ac{oas} for unitarily invariant matrices in the large-system limit. The analytic result enables us to quantify the gain over non-adaptive schemes.
\item \textit{Algorithmic approaches:} We propose computationally tractable algorithms for \ac{oas} in which conditional distortions are derived with low complexity. Even~though these low-complexity implementations degrade the performance, the algorithms are shown~to outperform~the state-of-the-art non-adaptive schemes, even for rather small oversampling factors.
\end{enumerate}

\subsection{Notation}
We represent scalars, vectors and matrices with non-bold, bold lower case and bold upper case letters, respectively. A $K \times K$ identity matrix is shown by $\mI_K$. $\mA^{\trp}$ indicates the transpose of $\mA$. The set of real numbers is denoted by $\setR$. The expectation operator is identified by $\E $. We use the shortened notation $[N]$ to represent $\set{1, \ldots , N}$.

\section{Bayesian OAS Framework}
\label{sec:sys}
We consider a generic form of \ac{oas} as follows:
\begin{enumerate}[label=(\alph*)]
\item The vector $\bx\in\setX^N$ is sensed $M$ times.
\item In step $m$, the matrix of stacked observations $\mY_m$ is
\begin{align}
\mY_m \coloneqq \left[ \by_1, \ldots, \by_m\right]
\end{align}
where $\by_m\in\setR^{K}$ for $m\in[M]$.
\item The vector of measurements in step $m\in[M]$ reads
\begin{align}
\by_m=\mA_m \mW_m \bx + \bz_m
\end{align}
where $\mA_m\in \setR^{K\times N}$ denotes the linear projection, and $\mW_m  \in \setR^{N\times N}$ is a diagonal matrix containing adaptation coefficients in subframe $m$. $\bz_m$ moreover denotes measurement noise and is assumed to be \ac{iid} Gaussian with mean zero. Denoting the noise variance for sensing duration $T_{\rm s}$ with $\sigma^2$, the variance in each subframe is $M\sigma^2$; hence,  $\bz_m\sim \man\brc{\bzero, M \sigma^2 \mI_K}$.
\item The $n$-th sample, i.e. $x_n$, is recovered in step $m$ from $\mY_m$ via a Bayesian estimator. Denoting the recovered sample by $r_n\brc{ \mY_m }$, this means that
\begin{align}
r_n\brc{ \mY_m } = \E\set{x_n|\mY_m}
\end{align}
for some postulated prior distribution $q\brc{x}$. This estimator reduces to several recovery algorithms, e.g.~the linear estimator, LASSO and optimal Bayesian estimator, for specific choices of $q\brc{x}$.
\item In step $m$, the conditional average distortion of sample $n$, with respect to distortion function $D[\cdot;\cdot]$, is 
\begin{align}
d_n\left(\mY_m\right) \coloneqq \E_{x_n \vert \mY_m} D\left[x_n; r_n(\mY_m) \right].
\end{align}
\item $\mW_m$ is determined as a function of the conditional~average distortions in step $m-1$, i.e.
\begin{align}
\mW_m = f_\Thr \brc{ d_1\left(\mY_{m-1}\right), \ldots,d_N\left(\mY_{m-1}\right)  }.
\end{align}
Examples for $f_\Thr(\cdot)$ are hard and soft adaptations.
\end{enumerate}
\subsection{Performance Analysis}
We consider the case in which $K$ and $N$ grow large such that the inverse load, i.e. $\rho = {N}/{K}$, remains constant. It is moreover assumed that
\begin{inparaenum}
\item $\mW_m$ is constructed by hard~adap- tation, meaning that diagonal entries are either one or zero. 
\item $\mA_m$ is bi-unitarily invariant, meaning that it has the same joint distribution as $\mU \mA_m \mV$, for any unitary matrices $\mU$ and $\mV$, such that $\mU$, $\mV$, and $\mA_m$ are jointly independent.
\end{inparaenum}

To quantify the performance, we consider the average \textit{per-sample} distortion as the measure which is defined as
\begin{align*}
D_{\rm avg} = \frac{1}{N}\E \sum_{n=1}^N D\left[x_n; r_n(\mY_M) \right] \label{Distortion_per}
\end{align*}
for the given distortion function $D[\cdot;\cdot]$.
\subsection{Some Definitions}
For sake of brevity, we define the following parameters.
\begin{itemize}
\item $\setM_m \subseteq [N]$ contains the indices of all non-zero~diago- nal entries in $\mW_m$. In other words, $\setM_m$ is an index set of the samples which are sensed in subframe $m$.
\item The indexer function $i_n \brc{\cdot} $ is defined as
\begin{align}
i_n \brc{m} = \begin{cases}
m & \text{if} \  n\in\setM_m\\
\epsilon & \text{if} \  n\notin\setM_m
\end{cases},
\end{align}
for some error symbol $\epsilon$.
\item $\setU_n \brc{m} \subseteq [m]$ contains the subframes in which $x_n$ is sensed, i.e., %
$\setU_n \brc{m} = \set{ u \in [m] : \ i_n \brc{u} \neq \epsilon }$.
\end{itemize}

\section{Asymptotic Characterization of OAS}
\label{sec:application}
The direct approach for the analysis of \ac{oas} requires~conditional average distortions to be derived explicitly for each subframe. This is not tractable for various choices of the prior distribution. We hence invoke the decoupling property of Bayesian estimators following discussions in \cite{bereyhi2016rsb,rangan2012asymptotic,bereyhi2016statistical} and the references therein.
\subsection{Decoupling Principle}
\hspace*{-2mm}The decoupling principle states that $\brc{ r_n\brc{\by_m}, x_n} $ for~$n\in$ $\setM_m$ converges in distribution to %
$\brc{r_n \brc{\yy_{n}{[m]}},x_n}$ 
where
\begin{align}
\yy_{n} \left[ m \right] = x_n + \zz_{n} \left[ m \right]  . \label{eq:decop_m}
\end{align}
with $\zz_{n}{[m]}\sim\man\brc{0, M \sigma_{m}^2}$. Here, $\sigma_{m}^2$ is the effective~noise variance for sensing duration $T_{\rm s}$ with sensing matrix $\mA_m$. An explicit derivation of $\sigma_{m}^2$ in terms of $\sigma^2$ and the statistics of $\mA_m$ is given in \cite{bereyhi2016statistical,rangan2012asymptotic}.

Symbol $\yy_{n} \left[ m \right]$ is often called the \textit{decoupled output} and is a visualization of asymptotic Gaussianity of interference. %
The \textit{decoupled} scalar Bayesian estimator, i.e.
\begin{align}
r_n \brc{\yy_{n}{[m]}} = \E \set{x_n \vert \yy_n[m] },
\end{align}
recovers sample $x_n$ by postulating that $\yy_n[m]$ is related to $x_n$ via \eqref{eq:decop_m}, and that sample $x_n$ is distributed with $q\brc{x}$.~The decoupling principle implies that as $N$ and $K$ grow~large, the distribution of the true recovered sample, i.e., $r_n\brc{\by_m}$, conditioned to $x_n$ converges to that of the sample recovered from the $n$-th decoupled output, i.e., $r_n\brc{\yy_n [m]}$.
%


\subsection{Asymptotics via the Decoupling Principle}
%
Using the decoupling property of Bayesian estimators, we derive a heuristic bound on the asymptotic performance of \ac{oas}. To this end, consider an \ac{oas} setting with~$M$~subframes. At subframe $m$, we define for $n\in[N]$
\begin{align}
\byy_n[m]= 
\left[ \yy_n\left[i_n(1)\right], \ldots, \yy_n\left[i_n(m)\right] \right]^\trp
\end{align}
where $\yy_n[i_n(\hat{m})]$ for $n\in\setM_{\hat{m}}$ is generated according to \eqref{eq:decop_m} and $\yy_n[\epsilon]\coloneqq 0$. $\byy_n[m]$ contains the decoupled outputs of the $n$-th sample, up to sensing step $m$, from those subframes in which $x_n$ is sensed. We now define the \textit{stacked decoupled system} in subframe $m$ as follows: 
\begin{definition}
For $m \in [M]$, the stacked decoupled system in subframe $m$ consists of signal samples $x_n$, observations 
\begin{align}
\bar{\yy}_n[m] = \sum_{u=1}^m \yy_n[i_n(u)] = \sum_{u\in\setU_n \brc{m}} \yy_n[u],
\end{align}
and recovered symbols $r_n\brc{\bar{\yy}_n[m]}$, for $n\in[N]$.
\end{definition}

Proposition~\ref{prop:1} indicates that \ac{oas} is asymptotically characterized via the stacked decoupled system. To state this~result, let us define the degraded version of an \ac{oas} setting.

\begin{definition}[Degraded \ac{oas} setting]
\label{assump:1}
Consider a Bayesian \ac{oas} setting in which sample $n$ in each subframe is reconstructed by recovery algorithm $r_n\brc{\cdot}$, and the conditional distortion is determined via $D[\cdot;\cdot]$. The degraded version of this \ac{oas} setting adapts $\mW_{m+1}$ using
\begin{align}
\hat{d}_n\left(\mY_m\right) \coloneqq \E_{x_n \vert r_n\brc{\mY_m}} D\left[x_n; r_n(\mY_m) \right].
\end{align}
%
\end{definition}
The degraded setting assumes that $r_n\brc{\mY_m}$ is a sufficient statistics for estimating $D\left[x_n; r_n(\mY_m) \right]$. This assumption in general can degrade the estimation performance. Hence, the average distortion of this setting, in general, bounds the average distortion of the original \ac{oas} setup from above.

Noting that $\hat{d}_n\left(\mY_m\right)$ is only a function of $r_n\brc{\mY_m}$, one can use the decoupling principle as show that the marginal distribution of $\hat{d}_n\left(\mY_1\right)$ does not depend on $n$; see \cite{bereyhi2016rsb}. We further consider the following conjecture:
\begin{assumption}
\label{conj1}
As $N\uparrow\infty$, $\{\hat{d}_n\left(\mY_1\right):n\in [N]\}$ is ergodic.
\end{assumption}
Conjecture~\ref{conj1} assumes that the empirical average over a function of $\hat{d}_n\left(\mY_m\right)$ converges to the expectation over the marginal distribution\footnote{The validity of the conjecture is straightforwardly verified~for~some~particular Bayesian estimators.}. Assuming this conjecture to hold, we state Proposition~\ref{prop:1} as follows:
\begin{proposition}
\label{prop:1}
Assume Conjecture~\ref{conj1} holds for recovery~algorithm $r_n\brc{\cdot}$ and distortion function $D[\cdot;\cdot]$. Let $\bx$ be \ac{iid} with $x_n\sim q_X\brc{x}$ and \ac{oas} employ hard thresholding with threshold $D_\Thr$ for adaptation, i.e., at each subframe, those samples are sensed whose conditional distortions are more than $D_{\rm Th}$. Then, in subframe $m$, the performance of the degraded \ac{oas} is equivalent to \ac{oas} performing over the stacked decoupled system with distortion function $D[\cdot;\cdot]$.
\end{proposition}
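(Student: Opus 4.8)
\begin{prf}
The plan is to prove the asserted equivalence by induction on the subframe index $m$, using the decoupling principle of~\eqref{eq:decop_m} to collapse each sensing step into a scalar Gaussian inference problem and invoking Conjecture~\ref{conj1} to decouple the threshold-based adaptation from the fresh measurement noise.

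I would begin with the base case $m=1$. Here every sample is sensed, so $\setM_1=[N]$ and $\bar{\yy}_n[1]=\yy_n[1]$, and the decoupling principle gives at once that $(r_n(\by_1),x_n)$ converges in distribution to $(r_n(\yy_n[1]),x_n)$, which is the stacked decoupled system at $m=1$. Since in the degraded setting $\hat d_n(\mY_1)$ depends on the data only through $r_n(\mY_1)$, its asymptotic law is inherited from the scalar decoupled output, and Conjecture~\ref{conj1} makes the empirical law of $\{\hat d_n(\mY_1):n\in[N]\}$ concentrate on this marginal. Consequently the active set $\setM_2=\{n:\hat d_n(\mY_1)>D_\Thr\}$ admits the same asymptotic description in the true and in the decoupled system.

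For the inductive step I would assume that the two systems agree through subframe $m-1$, so that the accumulated observation of sample $n$ is $\bar{\yy}_n[m-1]$ and the active set $\setM_m$ coincides. In subframe $m$ only the active samples are sensed, through a reduced-dimension bi-unitarily invariant projection, and I would re-apply the decoupling principle to this shrunken system to obtain one fresh scalar output $\yy_n[m]=x_n+\zz_n[m]$ per active sample. The next step is to combine the per-subframe outputs: since $\{\yy_n[u]:u\in\setU_n(m)\}$ are conditionally independent Gaussian observations of the same $x_n$, I would argue that the information retained about $x_n$ is carried by the stacked output $\bar{\yy}_n[m]=\sum_{u\in\setU_n(m)}\yy_n[u]$, so that $r_n(\mY_m)$ converges to $r_n(\bar{\yy}_n[m])$ and, through it, $\hat d_n(\mY_m)$ matches the distortion of the stacked decoupled system. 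A second appeal to Conjecture~\ref{conj1} then shows that $\setM_{m+1}$ agrees as well, closing the induction.

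The hard part will be justifying the re-application of the decoupling principle inside the induction. That principle is stated for a single-shot bi-unitarily invariant measurement, whereas here the active set $\setM_m$ is random and built from the recovered values of earlier subframes, which are correlated with $\bx$. The decisive point is that hard thresholding makes $\setM_m$ a function of the degraded distortions alone, so Conjecture~\ref{conj1} lets me treat it as a typical, deterministic-fraction selection that is statistically independent of the new noise $\bz_m$; I would then check that conditioning on such a typical active set preserves bi-unitary invariance of the shrunken projection and yields the correct effective variance $\sigma_m^2$ for the reduced load. A secondary item to verify is that the scalar combining genuinely reduces to the stacked output $\bar{\yy}_n[m]$ as defined. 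Granting these, the inductive closure is routine.
\end{prf}
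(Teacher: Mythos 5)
Your overall skeleton---induction over subframes, the observation that $\hat d_n$ is a deterministic function of $r_n$, ergodicity via Conjecture~\ref{conj1} to match the thresholded sets across the two settings, and Fisher--Neyman sufficiency to collapse $\byy_n[m]$ to the sum $\bar\yy_n[m]$---coincides with the paper's proof. The divergence, and the genuine gap, is in the inductive step. You re-apply the single-shot decoupling principle to each fresh measurement and then assert that, because the per-subframe outputs $\yy_n[u]$ are ``conditionally independent Gaussian observations'' of $x_n$, the joint estimator $r_n\brc{\mY_m}$ converges to $r_n\brc{\bar\yy_n[m]}$. That conditional independence is not delivered by $m$ marginal applications of the decoupling principle: the effective noises $\zz_n[u]$ in \eqref{eq:decop_m} arise from interference generated by the same signal $\bx$ in every subframe, and the principle as stated characterizes only the law of an estimator acting on a single bi-unitarily invariant measurement, not the joint law of an estimator acting on all accumulated observations. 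The paper closes exactly this hole by a different mechanism: it stacks the whole history into one cascaded sensing matrix $\bar\mA_m = \begin{bmatrix} \brc{\mW_1 \mA_1}^\trp &\ldots &\brc{\mW_m \mA_m}^\trp \end{bmatrix}^\trp$, notes that each block's dimensions grow proportionally with $N$, and invokes the multi-block decoupling result of \cite{bereyhi2018ICASSP}, under which the cascade decouples \emph{jointly} into the vector of per-block outputs $\byy_n[m]$. Your proposed repair---treating $\setM_m$ as a typical, asymptotically deterministic selection independent of the fresh noise---addresses the adaptivity of the active set, which is a real issue, but it is not a substitute for the joint decoupling statement: even with a fixed, deterministic $\setM_m$, your per-subframe argument would still need it.

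A second, smaller mismatch: your induction hypothesis (``the two systems agree through subframe $m-1$, the active set $\setM_m$ coincides'') is pathwise, but the degraded \ac{oas} setting and the stacked decoupled system live on different probability spaces, so no pathwise coincidence is available. The paper's hypothesis is deliberately weaker---only that $\setM_m$ contains asymptotically the same \emph{number} of indices in both settings---and its conclusion is correspondingly distributional (equal asymptotic per-sample distortion), which is all Proposition~\ref{prop:1} claims. If you restate your induction at the level of empirical distributions of $\hat d_n\brc{\mY_m}$ and cardinalities of $\setM_m$, and replace the per-subframe re-decoupling with the cascaded-system decoupling of \cite{bereyhi2018ICASSP}, your sketch aligns with the paper's argument.
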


\begin{proof}
Due to the page limit, we only give a brief sketch of the approach here and leave the details for the extended version of the manuscript. The proof follows from induction: Starting from $m=1$, we have $\setM_1=[N]$. Hence, the $n$-th conditional distortion of the degraded setting reads
\begin{subequations}
\begin{align}
\hat{d}_n\left(\mY_1\right) &= \int D\left[x_n; r_n(\by_1) \right] \dif P\brc{x_n \vert r_n\brc{\by_1}}\\
&\coloneqq F\brc{r_n\brc{\by_1}} \label{def:F}
\end{align}
\end{subequations}
where $F\brc{\cdot}$ is a deterministic function. In the stacked~decoupled setting, $\bar\yy_n[1] = \byy_n [1]$ for $n\in [N]$. Thus, 
\begin{align}
\hspace*{-2mm}\hat{d}_n\left(\bar\yy_n[1]\right) &= \int D\left[x_n; r_n(\yy_n[1]) \right] \dif P\brc{x_n \vert r_n\brc{\yy_n[1]}}.
\end{align}
The decoupling principle indicates convergence of the two settings in distribution. This implies that 
\begin{align}
\dif P \brc{x_n \vert r_n\brc{\by_1}} 
=\dif P \brc{x_n \vert r_n\brc{\yy_n[1]}}.
\end{align}
Consequently, we conclude that $\hat{d}_n\brc{ \bar\yy_n[1] } = F\brc{r_n\brc{\yy_n[1]}}$ with $F\brc{\cdot}$ being defined in \eqref{def:F}. 

Noting that $\bx$ is \ac{iid}, we can infer that $\set{r_n\brc{\yy_n[1]}}$~for $n\in[N]$ is \ac{iid}, too. This implies that $\{\hat{d}_n\left(\mY_1\right)\}$ are~identically distributed. Moreover, Conjecture~\ref{conj1} indicates that the empirical distribution of $\{\hat{d}_n\left(\mY_1\right)\}$ converges to the~distribution of $\hat{d}\brc{r_n\brc{\yy_n[1]}}$. Considering this conclusion, the strong law of large numbers implies that the frequency of entries whose conditional distortions are more than $D_\Thr$ is the same in both settings. Hence, in the asymptotic~regime, both settings choose the same number of samples~for~sensing in the next subframe.

Now consider subframe $m$, and assume $\setM_m$ in both~settings contains the same number of indices. In this case, 
\begin{align}
\hat{d}_n\left(\mY_m\right) 
= F\brc{r_n\brc{\bar\by_m}}
\end{align}
with $\bar\by_m = [\by_1^\trp,\ldots,\by_m^\trp]^\trp$. $\hat{d}_n\left(\mY_m\right)$ is the conditional~distortion of a cascaded setting whose sensing matrix is
\begin{align}
\bar\mA_m =
\begin{bmatrix}
\brc{\mW_1 \mA_1}^\trp &\ldots &\brc{\mW_m \mA_m}^\trp
\end{bmatrix}^\trp.
\end{align}
This cascaded setting can be grouped into $m$ sub-settings, each describing one of the previous subframes. The dimensions of these sub-settings grow large proportional to $N$. Thus, the decoupled setting in this case is described via the set of decoupled outputs of each sub-setting; see discussions in \cite{bereyhi2018ICASSP}. As a result, $\brc{r_n\brc{\bar\by_m},x_n}$ converges in distribution to $\brc{r_n\brc{\byy_n[m]},x_n}$. By a similar approach as for $m=1$, we could conclude that for $n\in[N]$
\begin{align}
\hat{d}_n\left(\byy_n[m]\right) = \E_{x_n \vert r_n\brc{\byy_n[m]}}\set{D\left[x_n; r_n\brc{\byy_n[m]} \right]}
\end{align}
is \ac{iid} whose distribution is identical to that of $\hat{d}_n\left(\mY_m\right)$. The Fisher–Neyman factorization theorem \cite{hogg1995introduction} states that $\bar{\yy}_n[m]$ is a sufficient statistics of $\byy_n[m]$. Thus,
$\hat{d}_n \brc{\byy_n[m]}=$ $\hat{d}_n \brc{\bar{\yy}_n[m]}$. 
Taking the assumption of ergodicity, it is finally concluded that $\setM_{m+1}$ in the both settings contains asymptotically the same number of indices. 
\end{proof}
\subsection{Gain over Non-adaptive Sensing}
Proposition~\ref{prop:1} is utilized to characterize the performance of \ac{oas} with unitarily invariant projections and a Bayesian estimator. In this respect, in each subframe, the stacked decoupled system is realized independently. Although these realizations do not result in exact conditional distortions in each subframe, the average per-symbol distortion is asymptotically equal to the one determined in the degraded \ac{oas} setting. Noting that this distortion is an upper-bound on the average distortion of the \ac{oas} setting, it is concluded that the gain reported via the stacked decoupled setting is in general a lower-bound on the actual gain acquired by using a Bayesian \ac{oas} setting.

In Fig.~\ref{Fig:1}, the \ac{mse} is plotted versus inverse load $\rho$ considering \ac{iid} sensing matrices and the optimal Bayesian estimator. The \ac{mse} is determined by~setting $D\left[x_n; r_n \right] = \brc{x_n-r_n}^2$ in \eqref{Distortion_per}. The signal samples are \ac{iid} Bernoulli-Gaussian, meaning that $x_n=b_n t_n$ with $b_n$ being $\delta$-Bernoulli, i.e. %
$\Pr\set{b_n=1} = 1- \Pr\set{b_n=0} = \delta$, %
and $t_n \sim \man\brc{0, \sigma_t^2}$. In the figure, $\delta = 0.1$ and $\sigma_t^2=1$. The noise variance is set to $\sigma^2=0.01$. The adaptation is done by hard thresholding with $\log D_{\rm Th} = -26.5$ $\rm dB$. For sake of comparison, the \ac{mmse} for \textit{non-adaptive}\footnote{This means $M=1$} sensing is plotted, too.  As the figure depicts, at larger inverse loads, the \ac{oas} significantly outperforms the \ac{mmse} bound for non-adaptive sensing.
\begin{figure}[t]
\centering
\input{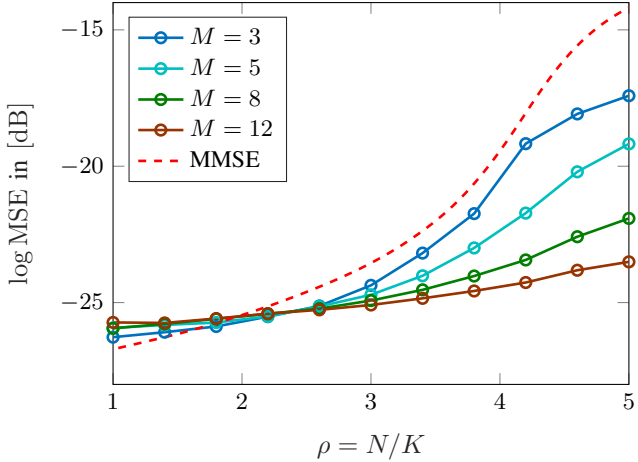}
\caption{\ac{mse} vs the inverse load. Signal samples are sensed by \ac{iid} random matrices and recovered by optimal Bayesian estimator. The source is considered to be sparse Gaussian with sparsity $\delta=0.1$. The \ac{oas} is adapted by hard thresholding with $\log D_{\rm Th} = -26.5$ $\rm dB$. The noise variance is set to $\sigma^2 = 0.01$. The dashed red line denotes the \ac{mmse} for non-adaptive optimal Bayesian recovery.\vspace*{-3mm}}
\label{Fig:1}
\end{figure}

\section{Low-complexity Algorithms for OAS}
\label{sec:algorithm}
The computational complexity of \ac{oas} is mainly dominated by the task of calculating conditional distortions. This task is intractable for the optimal Bayesian approach. In the sequel, we propose suboptimal low-complexity algorithms for \ac{oas} and compare their performances against the state~of the art. It is shown that, despite the suboptimality, the~algorithms still outperform the non-adaptive benchmark.

\subsection{\ac{oas} via Orthogonal Observations}
In this approach, the sensing matrix in each subframe is generated randomly according to the Haar distribution\footnote{A Haar matrix is uniformly distributed over the orthogonal group.}. Due to orthogonality of the sensing weights, the observations are simply decoupled, and hence the conditional distortions are calculated tractably via Bayes' theorem. The approach~is~illustrated in Algorithm~\ref{alg1}. For adaptation, the worst-case hard adaptation is employed. This means that in each subframe, the $K$ signal samples with largest conditional distortions in the previous subframe are sensed.
\subsubsection*{Derivation of the Algorithm}
Let $\setM_m = \set{j_1, \ldots, j_K}$ be the set of $K$ samples sensed in subframe $m$. We generate a Haar matrix $\mU_m \in \setR^{K\times K}$ and construct $\mA_m$ by setting~its $j_k$-th column to the $k$-th column of $\mU_m$. $\mA_m \mW_m$ is hence a $K\times M$ matrix whose columns are equal to the columns of $\mU_m$, if indexed by $\setM_m$, and zero otherwise. 

To decouple the observations, we multiply them with the transposed projection. In this case, for $k\in[K]$, we have
\begin{align}
\yy_{j_k} [m] &=  [\mU_m^\trp \by_m]_k = x_{j_k}  + \tilde{z}_{j_k} \brc{m} 
\end{align}
where $\tilde{\bz}\brc{m} = \mU_m^\trp \bz_m$. Since $\mU_m^\trp$ is orthogonal, we can conclude that $\tilde{z}_{j_k} \brc{m}\sim \man\brc{0,M\sigma^2}$.

The stacked decoupled output in this case reads
\begin{align}
\bar{\yy}_{j_k} [m] 
&= \abs{\setU_{j_k} \brc{m}} \ x_{j_k}  + \tilde{\zz}_{j_k} \brc{m}
\end{align}
where $\tilde{\zz}_{j_k} \brc{m}\sim \man\brc{0,\abs{\setU_{j_k} \brc{m}} M\sigma^2}$. Consequently, the Bayesian estimator for $x_n$ is given by
\begin{align}
r_m\brc{\mY_m} = \int x_n \ p\brc{\bar{\yy}_n[m]|x_n} q\brc{x_n} \dif x_n, \label{eq:Bayes1}
\end{align}
and the conditional distortion is determined by \cite{muller2018oversampled}
\begin{align}
d_{n} \brc{\mY_m} = \dfrac{ \partial}{\partial \bar{\yy}_{n} [m]} r_{n} \brc{\mY_m}. \label{eq:Bayes2}
\end{align}
\begin{algorithm}[t]
\caption{\ac{oas} via Orthogonal Observations}
\label{alg1}
\begin{algorithmic}[0]
\Initiate Set 
$d_n \brc{\mY_0} = +\infty$,
$\bar{\yy}_n [0] =0$,
and 
$c_n [0] =0$.
\vspace{1mm}
\For{$m=1:M$}\\
Generate $\mU_m\in\setR^{K\times K}$ from the Haar distribution.\\
Set $\mA_m\in\setR^{K\times N}$ arbitrarily and $\mW_m=\set{0}^{N\times N}$.\\
Find $\set{j_1, \ldots, j_N}$, such that 
\begin{align}
d_{j_1} \brc{\mY_{m-1}} \geq \ldots \geq d_{j_N} \brc{\mY_{m-1}}.
\end{align}
\For{$k\in[K]$}\\
$\qquad$Set
$\mA_m\brc{:,j_k} = \mU_m\brc{:,k}$ and 
$\mW_m\brc{j_k,j_k} = 1$\\
$\qquad$Set $c_{j_k}[m] = c_{j_k}[m-1]+1$
\EndFor\vspace*{1mm}\\
Sense $\bx$ via $\mA_m\mW_m$, i.e. $\by_m = \mA_m\mW_m \bx + \bz_m$.
\For{$k\in[K]$} update $\bar{\yy}_{j_k} [m]$
\begin{subequations}
\begin{align}
\bar{\yy}_{j_k} [m] &= \bar{\yy}_{j_k} [m-1] + [\mU_m^\her \by_m]_k,\\
\ee_{j_k}\brc{x \vert \bar{\yy}_{j_k} [m] } &= \exp\set{-\dfrac{ \brc{ \bar{\yy}_{j_k} [m] \hspace*{-.5mm}-\hspace*{-.5mm} c_{j_k}[m] x }^2 }{2c_{j_k}[m] M\sigma^2}}\hspace*{-.5mm}.
\end{align}
\end{subequations}
$\qquad$Update recoveries and conditional distortions as
\begin{subequations}
\begin{align}
r_{j_k} \brc{\mY_m} = \dfrac{\int x \  \ee_{j_k}\brc{x \vert \bar{\yy}_{j_k} [m] } q\brc{x} \dif x }{\int \ee_{j_k}\brc{x \vert \bar{\yy}_{j_k} [m] } q\brc{x} \dif x}.\\
d_{j_k} \brc{\mY_m} = \dfrac{ \partial}{\partial \bar{\yy}_{j_k} [m]} r_{j_k} \brc{\mY_m}.
\end{align}
\end{subequations}
\EndFor
\EndFor
\end{algorithmic}
\end{algorithm}
\subsubsection*{Numerical Investigations}
Fig.~\ref{Fig:2} shows the performance of Algorithm~\ref{alg1} for recovery of $N=200$ sparse Gaussian samples with $\delta=0.1$ and $\sigma_t^2 =1$. The noise variance is set to $\sigma^2=0.01$. The performance is compared against the asymptotics of non-adaptive LASSO when a row-orthogonal random matrix is employed for sensing; see \cite{bereyhi2016statistical,vehkapera2016analysis}. As the figure depicts, for rather small choices of $M$, the algorithm outperforms non-adaptive LASSO within a large range of $\rho$. This gain increases significantly as $M$ grows.
\subsection{\ac{oas} via Matched Filtering}
Algorithm~\ref{alg1} requires an independent orthogonal basis in each subframe. This can be further avoided by matched~filtering. In this approach, samples are sensed by \ac{iid}~weights, and the calculation of conditional distortions is simplified by postulating the impairment of other samples to be Gaussian. This \ac{oas} approach is given in Algorithm~\ref{alg2}. The algorithm employs hard thresholding for adaptation meaning that, in each subframe, samples whose conditional distortions in~the previous subframe are more than a threshold are sensed.
\begin{figure}[t]
\centering
\input{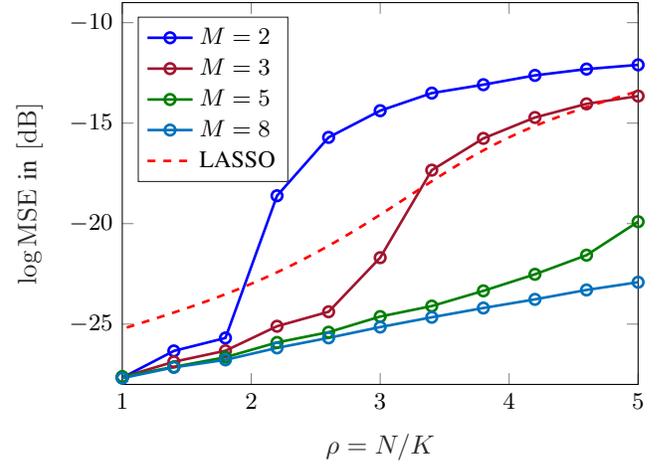}
\caption{\ac{mse} vs $\rho=N/K$ for Algorithm~\ref{alg1}. Signal samples are sparse Gaussian with sparsity $\delta=0.1$, and sensed by orthogonal random matrices. \ac{oas} is adapted by worst-case hard adaptation. The noise variance is set to $\sigma^2 = 0.01$. The dashed red line denotes asymptotic performance for non-adaptive LASSO recovery.\vspace*{-3mm}}
\label{Fig:2}
\end{figure}
\subsubsection*{Derivation of the Algorithm}
Let $\mA_m$ have \ac{iid} entries with zero mean and variance $1/\sqrt{K}$. In subframe $m$, 
\begin{align}
\by_m \hspace*{-.5mm}= \hspace*{-.5mm} \mA_m\mW_m \bx \hspace*{-.5mm} + \hspace*{-.5mm} \bz_m 
\hspace*{-.5mm}=\hspace*{-.5mm} x_n w_n\brc{m} \ba_n \brc{m} \hspace*{-.5mm}+\hspace*{-.5mm} \tilde{\bz}_n \brc{m}
\end{align}
where $\ba_j \brc{m}$ is the $j$-th column of $\mA_m$, $w_j\brc{m}$ represents the $j$-th diagonal entry of $\mW_m$ and\vspace*{-2mm}
\begin{align}
\tilde{\bz}_n \brc{m} = \sum_{j=1, j\neq n}^N  x_j w_j \brc{m} \ba_j \brc{m} + \bz_m.
\end{align}
$\qquad$\vspace*{-3mm}\\
Matched filtering assumes that $\tilde{\bz}_n \brc{m}$ is an \ac{iid} Gaussian vector independent of $x_n w_n\brc{m} \ba_n \brc{m}$. Although such an assumption is not valid in general, it lets us calculate~conditional distortions tractably. By the standard approach, 
\begin{align}
\E\{\tilde{\bz}_n \brc{m} \tilde{\bz}_n \brc{m}^\trp \} = \sigma_{\rm MF}^2 \mI_K,
\end{align}
where 
$\sigma_{\rm MF}^2 \coloneqq \rho_m \sigma_x^2 + M \sigma^2$.
Here, $\rho_m = \brc{\abs{\setM_m}-1}/K$ denotes the inverse load in subframe $m$, and $\sigma_x^2 = \E x_n^2$. 

Assuming Gaussian impairment, recovery of each sample is a scalar estimation problem. The Fisher–Neyman factorization indicates that $\ba_n^\trp\brc{m} \by_m / \norm{\ba_n \brc{m}}^2$ is a Bayesian sufficient statistics of $\by_m$ for estimating $x_n$. Hence, when $x_n$ is sensed in subframe $m$, i.e. for $n\in\setM_m$,~we~approximate the decoupled symbol with 
\begin{align}
\yy_n [m] &= \frac{1}{\norm{\ba_n \brc{m}}^2} \ba_n^\trp\brc{m} \by_m 
= x_n + \zz_n \brc{m}
\end{align}
where
$\zz_n \brc{m} \coloneqq \ba_n^\trp\brc{m} \tilde{\bz}_n \brc{m}/ {\norm{\ba_n \brc{m}}^2}$ %
is distributed as $\man\brc{0,\sigma_{\rm MF}^2 / \norm{\ba_n \brc{m}}^2}$. Consequently,
\begin{align}
\bar{\yy}_n[m] 
&= \abs{\setU_n \brc{m}} \ x_n  + \tilde{\zz}_n \brc{m}
\end{align}
where $\tilde{\zz}_n\brc{m} \sim \man\brc{0,\sigma_n^2 \brc{m}}$ with
\begin{align}
\sigma_n^2 \brc{m} &= \sigma_{\rm MF}^2 \sum_{u\in\setU_n \brc{m}} \frac{1}{\norm{\ba_n \brc{u}}^2}.
\end{align}
Finally, the Bayesian estimator and conditional distortions are derived as in \eqref{eq:Bayes1} and \eqref{eq:Bayes2}.
\subsubsection*{Numerical Investigations}
The \ac{mse} for Algorithm~\ref{alg2} is sketched versus the inverse load in Fig.~\ref{Fig:3}. The~figure shows a degraded performance compared to Algorithm~\ref{alg1}. This is due to the fact that observations in Algorithm~\ref{alg1} are decoupled deterministically in each subframe using orthogonality of the projecting vectors. For sake of comparison, the asymptotic performance of LASSO, as well as the \ac{mmse} bound, for non-adaptive sensing with an \ac{iid} matrix is plotted, too. As it depicts, by $M=12$ subframes, \ac{oas} with matched filtering outperforms LASSO for $\rho \geq 1$. Despite its suboptimality, this approach outperforms the non-adaptive \ac{mmse} bound for large inverse loads.
\begin{figure}[t]
\centering
\input{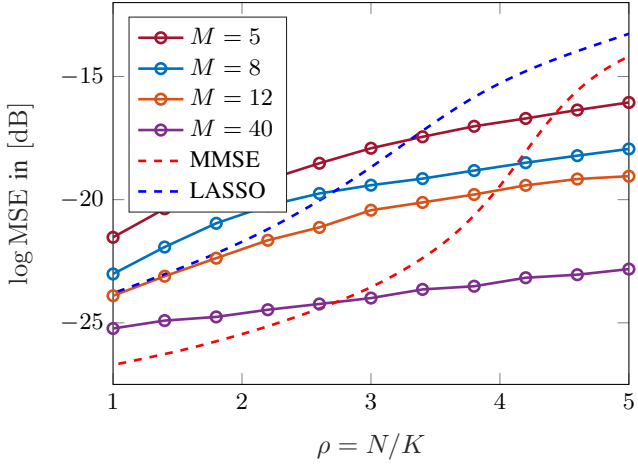}
\caption{\ac{mse} vs inverse load for Algorithm~\ref{alg2}. Here, $N=200$ sparse Gaussian samples with sparsity $\delta=0.1$ are sensed by \ac{iid} random matrices. The \ac{oas} is adapted via hard thresholding~with $\log D_{\rm Th} =-26.5$ $\rm dB$, and $\sigma^2 = 0.01$. The dashed lines denote asymptotic performance for non-adaptive LASSO and conditional mean estimation.\vspace*{-3mm}}
\label{Fig:3}
\end{figure}
\begin{algorithm}[t]
\caption{\ac{oas} via Matched Filtering}
\label{alg2}
\begin{algorithmic}[0]
\Initiate 
$d_n \brc{\mY_0} = +\infty$,
$\bar{\yy}_n [0] =0$,
$\sigma_n^2 \brc{0} =0$,
$c_n [0] =0$.\vspace*{1mm}
\For{$m=1:M$}\\
Generate $\mA_m\in\setR^{K\times N}$ \ac{iid} with $\man\brc{0,1/K}$.\\
Set $\mW_m=\set{0}^{N\times N}$ and $\setM_m\subseteq [N]$, such that
\begin{align}
d_{n} \brc{\mY_{m-1}} \geq D_{\Thr}.
\end{align}
for all $n\in\setM_m$. Set $\rho_m = \abs{\setM_m}/K$.\vspace*{1mm}
\For{$n\in\setM_m$}\\
$\qquad$Set
$\mW_m\brc{n,n} = 1$ and $c_{n}[m] = c_{n}[m-1]+1$.
\EndFor\vspace*{1mm}\\
Sense $\bx$ via $\mA_m\mW_m$, i.e. $\by_m = \mA_m\mW_m \bx + \bz_m$.\vspace*{.5mm}
\For{$n\in \setM_m$} update 
\begin{subequations}
\begin{align}
\bar{\yy}_{n} [m] &= \bar{\yy}_{n} [m-1] + \frac{\ba_n^\trp\brc{m} \by_m}{\norm{\ba_n \brc{m}}^2},\\
\sigma_n^2 \brc{m} &= \sigma_n^2 \brc{m-1} + \frac{\rho_m \sigma_t^2 + M \sigma^2}{\norm{\ba_n \brc{m}}^2}\\
\ee_{n}\brc{x \vert \bar{\yy}_{n} [m] } &= \exp\set{-\dfrac{\brc{ \bar{\yy}_{n} [m] - c_{n}[m] x }^2 }{2\sigma_n^2 \brc{m}}}.
\end{align}
\end{subequations}
$\qquad$Update recoveries and conditional distortions as
\begin{subequations}
\begin{align}
r_{n} \brc{\mY_m} = \dfrac{\int x \  \ee_{n}\brc{x \vert \bar{\yy}_{n} [m] } q\brc{x} \dif x }{\int \ee_{n}\brc{x \vert \bar{\yy}_{n} [m] } q\brc{x} \dif x}.\\
d_{n} \brc{\mY_m} = \dfrac{ \partial}{\partial \bar{\yy}_{n} [m]} r_{n} \brc{\mY_m}.
\end{align}
\end{subequations}
\EndFor
\EndFor
\end{algorithmic}
\end{algorithm}

\section{Conclusion}
\label{conclusion}
Asymptotics of Bayesian \ac{oas} with generic random~projections was left unaddressed, due to the computational~intractability. This study has characterized the performance of \ac{oas} in the large-system limit by means of the decoupling property of Bayesian estimators. The results have depicted significant enhancement achieved by \ac{oas} with hard adaptation compared to the non-adaptive \ac{mmse} bound. For sake of implementation, two computationally tractable algorithms based on orthogonal sensing and matched filtering have been proposed. These algorithms outperform non-adaptive LASSO and the \ac{mmse} bound even for small oversampling factors.

Analytic derivations of this study can be further employed to investigate impacts of different adaptation strategies on the performance of \ac{oas}. Design of low-complexity \ac{oas} algorithms based on $\ell_1$-norm minimization is another direction for  future work.

\bibliography{ref}
\bibliographystyle{IEEEtran}
\end{document}